\newtheorem{thm}{Theorem}
\newtheorem{cor}[thm]{Corollary}
\newtheorem{lem}[thm]{Lemma}
\theoremstyle{plain}
\newcommand{\gns}{\textrm{GNS}}
\newcommand{\pr}{\textrm{Pr}}
\newcommand{\R}{\mathbb{R}}
\newcommand{\E}{\textrm{E}}
\newcommand{\sgn}{\textrm{sgn}}
\title{The Gaussian Surface Area and Noise Sensitivity of Degree-$d$ Polynomials}
\author{Daniel M. Kane}
\begin{document}

\maketitle

\section{Introduction}

We provide asymptotically sharp bounds for the Gaussian surface area and the Gaussian noise sensitivity of polynomial threshold functions.  In particular we show that if $f$ is a degree-$d$ polynomial threshold function, then its Gaussian sensitivity at noise rate $\epsilon$ is less than some quantity asymptotic to $\frac{d\sqrt{2\epsilon}}{\pi}$ and the Gaussian surface area is at most $\frac{d}{\sqrt{2\pi}}$.  Furthermore these bounds are asymptotically tight as $\epsilon\rightarrow 0$ and $f$ the threshold function of a product of $d$ distinct homogeneous linear functions.

The noise sensitivity and surface area are both of fundamental interest and useful in the analysis of agnostic learning algorithms (see \cite{kn:learning}). In particular our results imply that the class of degree-$d$ polynomial threshold functions is agnostically learnable under the $n$-dimensional Gaussian distribution in time $n^{O(d^2/\epsilon^4)}.$

A number of other authors have attempted to prove bounds along these lines. \cite{kn:relate} proves a bound on noise sensitivity in terms of surface area and we relate our bounds essentially by also proving the other direction of this inequality for boolean functions with smooth interface that switch signs a bounded number of times on any line through the origin.  Our bounds are obtained via a simple computation in the case of $d=1$.  A bound of $\tilde{O}(\epsilon^{1/(2d)})$ noise sensitivity was recently proved by \cite{kn:DRST} and independently by \cite{kn:HKM} for multilinear polynomials.

There is also interest in related questions for points picked uniformly from vertices of the hypercube rather than with the Gaussian distribution.   It is conjectured in \cite{kn:GL} that the corresponding noise sensitivity in this case is also always $O(d\sqrt{\epsilon}).$  The $d=1$ case of this conjecture was proved by \cite{kn:P}, improving upon a bound of $O(\epsilon^{1/4})$ of \cite{kn:BKS}.  It is noted in \cite{kn:DRST} that such a result would imply a similar bound for the Gaussian case.  Hence our results can be thought of as a first step toward proving this conjecture.

\subsection{Basic Definitions}

Given a function $f:\R^n\rightarrow \{-1,1\}$ we define the Gaussian noise sensitivity at noise rate $\epsilon$ as
$$
\gns_\epsilon(f) := \pr(f(X) \neq f(Z))
$$
where $X$ is an $n$-dimensional Gaussian random variable, and $Z=(1-\epsilon)X+\sqrt{2\epsilon-\epsilon^2}Y$ for $Y$ an independent $n$-dimensional Gaussian.

This is closely related to the Gaussian surface area of $f^{-1}(1)$.  In particular we define the Gaussian surface area of a set $A$ to be
$$
\Gamma(A) := \liminf_{\delta\rightarrow 0} \frac{\textrm{GaussianVolume}(A_\delta\backslash A)}{\delta}.
$$
Where the Gaussian volume of a region $R$ is $\pr(X\in R)$ for $X$ a Gaussian random variable, and where $A_\delta$ is the set of points $x$ so that $d(x,A)\leq \delta$ (under the Euclidean metric).  We note that if $A$ is an open region whose boundary is smooth away from codimension 2, that its Gaussian surface area is equal to
$$
\int_{\partial A} \phi(x) d\sigma.
$$
Where $\phi(x)$ is the Gaussian density, and $d\sigma$ is the surface measure on $\partial A$.  Furthermore if $A$ is such a region, then its Gaussian surface area is seen to be equal to
$$
\lim_{\delta\rightarrow 0} \frac{\textrm{GaussianVolume}((\partial A)_\delta)}{2\delta}.
$$
For $f$ a boolean function, we define
$$
\Gamma(f) := \Gamma(f^{-1}(1)).
$$

The concepts of noise sensitivity and surface area are related to each other by noting that the noise sensitivity is roughly the probability that $X$ is close enough to the boundary that wiggling it will push it over the boundary.

\subsection{Statement of Results}

We focus on proving two main results.  We define $f$ to be a degree $d$ polynomial threshold function if $f(x)=\sgn(p(x))$ for some degree $d$ polynomial $p$.  We prove the following Theorems about such functions:

\begin{thm}\label{nsthm}
If $f$ is a degree $d$ polynomial threshold function, then
$$
\gns_\epsilon(f) \leq \frac{d\arcsin(\sqrt{2\epsilon-\epsilon^2})}{\pi} \sim \frac{d\sqrt{2\epsilon}}{\pi} = O(d\sqrt{\epsilon}).
$$
Furthermore this bound is asymptotically tight as $\epsilon\rightarrow 0$ for the threshold function of any product of distinct linear functions.
\end{thm}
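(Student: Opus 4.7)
The plan is to exploit the rotational invariance of the Gaussian distribution.  First, I would rewrite $Z$ as $\cos(\theta) X + \sin(\theta) Y$ where $\theta = \arcsin(\sqrt{2\epsilon-\epsilon^2})$ and $Y$ is an independent standard Gaussian, and then embed this pair into the one-parameter family $X_\alpha := \cos(\alpha) X + \sin(\alpha) Y$ for $\alpha \in [0, 2\pi)$.  Each $X_\alpha$ is marginally a standard Gaussian, and by rotational invariance the joint distribution of $(X_\alpha, X_{\alpha'})$ depends only on $\alpha - \alpha'$.  Setting $g(\alpha) := \sgn(p(X_\alpha))$, we then have $\gns_\epsilon(f) = \pr(g(0) \neq g(\theta))$, so it suffices to bound the probability that $g$ changes value over an arc of length $\theta$.

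The central observation is that, for any fixed realization of $(X, Y)$, the function $\alpha \mapsto p(X_\alpha)$ is a trigonometric polynomial of degree at most $d$ in $\alpha$, hence has at most $2d$ zeros on $[0, 2\pi)$; consequently $g$ undergoes at most $2d$ sign changes per period.  Letting $N$ be the number of sign changes of $g$ on $[0, 2\pi)$, we obtain $\E[N] \leq 2d$.  The rotational symmetry of the underlying process makes the expected sign-change counting measure rotation-invariant, so it is uniform on $[0, 2\pi)$; in particular the expected number of sign changes in any arc of length $\theta$ equals $\theta \E[N]/(2\pi) \leq d\theta/\pi$.  Since $g(0) \neq g(\theta)$ forces at least one sign change on $[0, \theta]$, the claimed bound $\gns_\epsilon(f) \leq d\theta/\pi$ follows.

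For asymptotic tightness, I would take $p = \ell_1 \cdots \ell_d$ for distinct homogeneous linear forms and note that $f(X) \neq f(Z)$ iff an odd number of the $\ell_i$ change sign between $X$ and $Z$.  Each individual flip event has probability exactly $\theta/\pi$ by the $d = 1$ computation, while any pairwise event $\{\ell_i \text{ and } \ell_j \text{ both flip}\}$ has probability $O(\epsilon)$, since each flip forces $X$ into an $O(\sqrt\epsilon)$-width strip about the corresponding hyperplane.  A Bonferroni-type estimate then yields $\gns_\epsilon(f) = d\theta/\pi + O(\epsilon) \sim d\sqrt{2\epsilon}/\pi$, matching the upper bound to leading order.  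The main obstacle I anticipate is a careful justification of the stationarity step — in particular, verifying that the measure-zero configurations of $(X,Y)$ on which $p(X_\alpha)$ vanishes identically or has higher-order zeros contribute nothing to the expected sign-change count — but this should be routine given that a nonzero degree-$d$ trigonometric polynomial has only finitely many zeros.
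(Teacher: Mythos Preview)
Your proposal is correct and follows essentially the same approach as the paper: define the rotating family $X_\alpha=\cos(\alpha)X+\sin(\alpha)Y$, use rotational invariance to reduce $\gns_\epsilon(f)$ to a sign-change count for the trigonometric polynomial $\alpha\mapsto p(X_\alpha)$, and bound that count by $2d$. The only cosmetic differences are that the paper extracts the factor $\theta/(2\pi)$ via a discrete averaging and $n\to\infty$ limit rather than your direct stationarity argument, and handles tightness by identifying the two sources of slack qualitatively rather than via your Bonferroni estimate.
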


\begin{thm}\label{surfaceareatheorem}
If $f$ is a degree-$d$ polynomial threshold function then $\Gamma(f) \leq \frac{d}{\sqrt{2\pi}}.$
\end{thm}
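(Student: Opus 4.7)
The plan is to deduce the surface area bound from Theorem \ref{nsthm} by taking an $\epsilon \to 0$ limit. Since $f=\sgn(p)$ for a degree-$d$ polynomial $p$, writing $A=f^{-1}(1)$, the topological boundary $\partial A$ sits inside the real algebraic hypersurface $\{p=0\}$, which is smooth away from a subset of codimension at least $2$. Consequently $\Gamma(f)$ is given by the integral formula from the introduction, integrating $\phi$ over the smooth part of $\partial A$.

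The first step is to establish the asymptotic identity
$$
\gns_\epsilon(f) \;=\; 2\sqrt{\epsilon/\pi}\,\Gamma(f)\,\bigl(1+o(1)\bigr)\qquad(\epsilon\to 0)
$$
for any such $f$. Heuristically, for small $\epsilon$ one has $Z\approx X+\sqrt{2\epsilon}\,Y$; conditioned on $X$ lying at signed normal distance $s$ from a smooth point of $\partial A$, the probability that $Z$ lands on the opposite side is close to $\Phi(-|s|/\sqrt{2\epsilon})$, with $\Phi$ the standard normal cdf. Integrating this against $\phi(x)\,d\sigma(x)\,ds$ over a tubular neighborhood of $\partial A$, together with the elementary identity $\int_{-\infty}^\infty \Phi(-|u|)\,du = \sqrt{2/\pi}$, yields the claimed asymptotic.

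The theorem then follows by expanding
$$
\frac{d\,\arcsin(\sqrt{2\epsilon-\epsilon^2})}{\pi} \;=\; \frac{d}{\sqrt{2\pi}}\cdot 2\sqrt{\epsilon/\pi}\,(1+o(1)),
$$
dividing the noise sensitivity bound of Theorem \ref{nsthm} by $2\sqrt{\epsilon/\pi}$, and letting $\epsilon\to 0$ to obtain $\Gamma(f)\leq d/\sqrt{2\pi}$.

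The main obstacle is the rigorous justification of the asymptotic. Two issues arise: (i) points close to the codimension-$2$ singular set of $\{p=0\}$ sit inside a $\delta$-tube of Gaussian measure $O(\delta^2)$, so they contribute only $o(\sqrt{\epsilon})$ to $\gns_\epsilon(f)$; (ii) the tubular normal parametrization can fail to be injective where two smooth sheets of the zero set are within $\sqrt{2\epsilon}$ of each other. For (ii) one exploits the crucial degree constraint: any line meets $\{p=0\}$ in at most $d$ points, so the overlap multiplicity of tube sheets is bounded by $d$ and the estimate cannot explode. A dominated-convergence argument, using rapid Gaussian decay to truncate outside a large ball, then allows one to pass the $\epsilon\to 0$ limit through the integral and complete the proof.
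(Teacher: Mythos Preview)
Your approach follows the same overall strategy as the paper --- deduce the surface-area bound from Theorem~\ref{nsthm} via a small-$\epsilon$ asymptotic linking a sensitivity-type quantity to $\Gamma(f)$ --- but takes a more direct route. The paper works with the additive perturbation $X+\epsilon Y$: Lemma~\ref{crossarealem} gives $\pr(f(X)\neq f(X+\epsilon Y))\sim 2\epsilon\,\Gamma(f)/\sqrt{2\pi}$, but since $X+\epsilon Y$ is not itself a standard Gaussian, a separate radial-scaling estimate (Lemma~\ref{linelem}, combined with Theorem~\ref{nsthm} in Corollary~\ref{wigglecor}) is needed to bound that probability by $d\epsilon/\pi+o(\epsilon)$. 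You instead carry out the tube computation directly for the correlated pair $(X,Z)$, so Theorem~\ref{nsthm} applies immediately and Lemma~\ref{linelem} and Corollary~\ref{wigglecor} are bypassed entirely. The trade-off is that, conditioned on $X$, the displacement $Z-X=-\epsilon X+\sqrt{2\epsilon-\epsilon^2}\,Y$ carries an $O(\epsilon)$ drift that your sketch silently absorbs into ``$Z\approx X+\sqrt{2\epsilon}\,Y$''; it is harmless at the $\sqrt{\epsilon}$ scale but deserves a sentence. One further remark: for the inequality $\Gamma(f)\le d/\sqrt{2\pi}$ you only need the \emph{lower} bound $\gns_\epsilon(f)\ge 2\sqrt{\epsilon/\pi}\,\Gamma(f)(1-o(1))$. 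For that direction the real obstruction in your item (ii) is not overcounting via a non-injective tube map but cancellation: a segment from $X$ to $Z$ crossing two nearby sheets of $\{p=0\}$ may produce no sign change of $f$. Bounding the overlap multiplicity by $d$ does not address this; what does is that the locus where distinct sheets come within $O(\sqrt{\epsilon})$ of each other has Gaussian measure $o(\sqrt{\epsilon})$ and hence contributes nothing in the limit.
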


Section 2 will be devoted to the proof of Theorem \ref{nsthm}, Section 3 to the proof of Theorem \ref{surfaceareatheorem}, and Section 4 will provide some closing notes.

\section{Proof of the Noise Sensitivity Bound}

\begin{proof}[Proof of Theorem \ref{nsthm}]
We begin by letting $\theta = \arcsin(\sqrt{2\epsilon-\epsilon^2})$.  We need to bound
\begin{equation}\label{sensitivity}
p:=\gns_\epsilon(f) = \pr(f(X)\neq f(\cos(\theta)X+\sin(\theta)Y)).
\end{equation}
We note that the value of $p$ given in Equation \ref{sensitivity} remains the same if $X$ and $Y$ are replaced by any $X'$ and $Y'$ that are i.i.d. Gaussian distributions.  In particular we define
$$
X_\phi = \cos(\phi)X + \sin(\phi)Y.
$$
Note that $X_{\phi}$ and $X_{\phi+\pi/2}$ are i.i.d. Gaussians.  Using these distributions we find that for any $\phi$ that since
\begin{align*}
X_{\theta+\phi} & =\cos(\theta+\phi)X+\sin(\theta+\phi)Y \\ & = \cos(\theta)\cos(\phi)X-\sin(\theta)\sin(\phi)X + \cos(\theta)\sin(\phi)Y+\sin(\theta)\cos(\phi)Y \\ & = \cos(\theta)X_{\phi}+\sin(\theta)X_{\phi+\pi/2},
\end{align*}
we have
$$
p = \pr(f(X_\phi)\neq f(X_{\phi+\theta})).
$$
Therefore we have for any integer $n$ that
\begin{equation}\label{manychanges}
np = \pr(f(X_0)\neq f(X_\theta))+\pr(f(X_\theta)\neq f(X_{2\theta}))+\ldots+\pr(f(X_{(n-1)\theta})\neq f(X_{n\theta})).
\end{equation}
We define the random function $F:\R\rightarrow \{-1,1\}$ by
$$
F(\phi)=f(X_\phi).
$$
($F$ depends on $X$ and $Y$ as well as $\phi$).
We note that the left hand side of Equation \ref{manychanges} is at most the number of times that $F(\phi)$ changes signs on the interval $[0,n\theta]$.  Therefore we have that
\begin{equation}\label{signchange}
p \leq \frac{\E[\textrm{number of times} \ F \ \textrm{changes signs on} \ [0,n\theta]]}{n}.
\end{equation}
We note that $F(\phi)$ is periodic in $\phi$ with period $2\pi$.  Therefore the number of times $F$ changes sign on $[0,n\theta]$ is the number of times that $F$ changes sign on $[0,2\pi)$ times $\left(\frac{n\theta}{2\pi}+O(1)\right)$.  Applying this to Equation \ref{signchange}, we get that
$$
p \leq \frac{\E[\textrm{number of sign changes of} \ F \ \textrm{on} \ [0,2\pi)]\left(\frac{n\theta}{2\pi} +O(1) \right)}{n}.
$$
Taking a limit as $n\rightarrow\infty$ yields
\begin{equation}\label{finalbound}
p\leq \frac{\theta\E[\textrm{number of sign changes of} \ F \ \textrm{on} \ [0,2\pi]]}{2\pi}.
\end{equation}

We now make use of the fact that $f$ is a degree $d$ polynomial threshold function.  In particular we will show that for any $X$ and $Y$ that $F$ changes signs at most $2d$ times on $[0,2\pi)$.  We let $f=\sgn(g)$ for some degree $d$ polynomial $g$.  We note that the number of sign changes of $F$ is equal to the number of zeroes of the function $g(\cos(\phi)X+\sin(\phi)Y)$ (unless this function is identically 0, which happens with probability 0 and can be ignored).  It should be noted though that $g(\cos(\phi)X+\sin(\phi)Y)=0$ if and only if $z=e^{i\phi}$ is a root of the degree-$2d$ polynomial
$$
z^d g\left(\left(\frac{z+z^{-1}}{2}\right)X+\left(\frac{z-z^{-1}}{2i}\right)Y \right).
$$
Therefore the expectation in Equation \ref{finalbound} is at most $2d$.  Therefore we have
$$
p \leq \frac{2d\theta}{2\pi} = \frac{d\theta}{\pi}
$$
as desired.

We also note the ways in which the above bound can fail to be tight.  Firstly, there may be some probability that $F$ changes signs less than $2d$ times on a full circle.  Secondly, the number of times $F$ changes signs may be more than the fraction of the time that $f(n\theta)\neq f((n+1)\theta)$ if sign changes are spaced more tightly than $\theta$.  On the other hand it should be noted that if $f$ is the threshold function for a product of $d$ distinct homogeneous linear functions, the first case happens with probability 0, and the probability of the second case occurring will necessarily go to 0 as $\epsilon$ does.  Therefore for such functions our bound is asymptotically correct as $\epsilon\rightarrow 0$.
\end{proof}

\section{Proof of the Gaussian Surface Area Bounds}

We will first need to bound a slight variant of the noise sensitivity of a polynomial threshold function.  We begin by proving the following Lemma:

\begin{lem}\label{linelem}
If $f$ is a degree $d$ polynomial threshold function in $n$ dimensions, $\epsilon>0$ and $X$ a random Gaussian variable, then
$$
\pr(f(X)\neq f(X(1+\epsilon))) \leq d\epsilon\sqrt{\frac{n}{4\pi}}.
$$
\end{lem}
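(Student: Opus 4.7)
The plan is to reduce the $2$-point probability to a $1$-dimensional question along random rays from the origin, and then carefully bound the chi-measure of each multiplicative slab along such a ray.

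I would start by writing $X = R\omega$, with $R = \|X\|$ following the chi distribution with $n$ degrees of freedom (density $\rho(s) := \frac{s^{n-1} e^{-s^2/2}}{2^{n/2-1}\Gamma(n/2)}$) and $\omega$ independent and uniform on $S^{n-1}$. Writing $f = \sgn(g)$ with $\deg g \leq d$, for each $\omega$ the restriction $q_\omega(t) := g(t\omega)$ is a polynomial in $t$ of degree at most $d$; let $0 < r_1(\omega) < \cdots < r_{k(\omega)}(\omega)$ be its positive real roots. The event $f(X) \neq f((1+\epsilon)X)$ forces $R$ to lie in some slab $(r_i(\omega)/(1+\epsilon), r_i(\omega))$, so a union bound gives
\[
\pr(f(X) \neq f((1+\epsilon)X)) \leq \E_\omega \sum_{i=1}^{k(\omega)} \pr\!\Bigl(R \in \bigl(\tfrac{r_i(\omega)}{1+\epsilon},\, r_i(\omega)\bigr)\Bigr).
\]

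The critical saving comes from an antipodal-symmetry trick: the positive roots of $q_{-\omega}$ are in bijection with the negative roots of $q_\omega$, so $k(\omega) + k(-\omega) \leq d$ for every $\omega$ (since $q_\omega$ has at most $d$ real roots). The uniform measure on $S^{n-1}$ is antipodally symmetric, hence $\E_\omega k(\omega) \leq d/2$; without this factor of two, the final bound would be off by $2$.

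For each slab, I would use the multiplicative substitution $s = r e^{-u}$ to write
\[
\pr\!\Bigl(R \in \bigl(\tfrac{r}{1+\epsilon},\, r\bigr)\Bigr) = \int_0^{\log(1+\epsilon)} (re^{-u})\,\rho(re^{-u})\,du \leq \epsilon\, \sup_{s>0} s\rho(s),
\]
using $\log(1+\epsilon) \leq \epsilon$. Since $s\rho(s) = \frac{s^n e^{-s^2/2}}{2^{n/2-1}\Gamma(n/2)}$ is maximized at $s = \sqrt n$, the supremum equals $\frac{n\,(n/2)^{n/2}\,e^{-n/2}}{\Gamma(n/2+1)}$, and Robbins's non-asymptotic Stirling bound $\Gamma(z+1) \geq \sqrt{2\pi z}\,(z/e)^z$ (valid for all $z>0$) upgrades this to $\sup_s s\rho(s) \leq \sqrt{n/\pi}$. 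Assembling the three estimates yields
\[
\pr(f(X) \neq f((1+\epsilon)X)) \leq \tfrac{d}{2}\cdot \epsilon \cdot \sqrt{n/\pi} = d\epsilon\sqrt{\tfrac{n}{4\pi}}.
\]

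The main obstacles I anticipate are (a) spotting the antipodal symmetry that produces $d/2$ rather than $d$, without which one only gets $d\epsilon\sqrt{n/\pi}$; and (b) the sharp constant $\sup_s s\rho(s) \leq \sqrt{n/\pi}$, which requires Robbins's Stirling bound to hold uniformly in $n$ (easy to verify for small $n$ by hand, and standard asymptotically).
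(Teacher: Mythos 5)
Your proposal is correct and follows essentially the same route as the paper: condition on the direction of $X$, union-bound over the at most $d$ sign changes along the line, and bound the mass of each multiplicative interval by $\epsilon$ times the mode of the log-scale chi density, controlled via the Stirling/Robbins bound $\Gamma(n/2+1)\geq \sqrt{\pi n}\,(n/2e)^{n/2}$. Your antipodal-symmetry argument giving $\E_\omega k(\omega)\leq d/2$ is just a cleaner bookkeeping of the paper's factor $\tfrac12$ for ``$X$ on the correct side of the origin,'' so the two proofs coincide in substance.
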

\begin{proof}
First note that by first conditioning on the line that $X$ lies in we may reduce this problem to the case of a one dimensional distribution.  Note that $f$ changes sign at most $d$ times along this line.  We need to bound the probability that at least one of these sign changes is between $X$ and $(1+\epsilon)X$.  It therefore suffices to prove that for any one of these sign changes, that it lies between $X$ and $(1+\epsilon)X$ with probability at most $\epsilon\sqrt{\frac{n}{4\pi}}$.  Note that the probability that $X$ is on the correct side of the origin is $\frac{1}{2}$.  Beyond that $|X|^2$ satisfies the $\chi^2$ distribution with $n$ degrees of freedom, namely $\frac{1}{2^{n/2}\Gamma(n/2)}x^{n/2-1}e^{-x/2}dx$.  Letting $y=\log(x)=2\log(|X|)$ we find that that $y$ has distribution
$$
\frac{1}{2^{n/2}\Gamma(n/2)}e^{ny/2}e^{-e^y / 2}dy.
$$
We want the probability that $y$ is within a particular range of size $2\log(1+\epsilon)$.  This is at most $2\epsilon$ times the densest part of the density function.  This is achieved when $ny - e^y$ is maximal, or when $y=\log(n)$.  Then the density is
$$
\frac{1}{2^{n/2}\Gamma(n/2)}n^{n/2}e^{-n/2} = \sqrt{\frac{n}{4\pi}}\frac{(n/2)^{n/2}e^{-n/2}\sqrt{2\pi (n/2)}}{(n/2)!} \leq \sqrt{\frac{n}{4\pi}}.
$$
Multiplying this by $d$, $2\epsilon$ and $\frac{1}{2}$ (the probability that $X$ is on the correct side of 0), we get our bound.  Notice also that this bound should be nearly sharp if the polynomial giving $f$ is a product of terms of the form $|X|^2-r_i$ for $r_i$ approximately $n$ and spaced apart by factors of $(1+\epsilon)^2$.
\end{proof}

We can now prove a bound on a quantity more relevant to Gaussian surface area:

\begin{cor}\label{wigglecor}
If $f$ is an $n$ dimensional, degree $d$ polynomial threshold function, $\epsilon>0$ and $X$ and $Y$ independent Gaussians, then
$$
\pr(f(X)\neq f(X+\epsilon Y)) \leq \frac{d\epsilon}{\pi} + \frac{d\epsilon^2}{4}\sqrt{\frac{n}{\pi}}.
$$
\end{cor}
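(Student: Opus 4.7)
The plan is to reduce the statement about $X+\epsilon Y$, which is not a standard Gaussian (its covariance is $(1+\epsilon^2)I$), to two cleaner estimates already in hand: Theorem \ref{nsthm}, which controls rotational noise sensitivity, and Lemma \ref{linelem}, which controls radial noise sensitivity. The key observation is that $X+\epsilon Y$ is a positive scalar multiple of a genuine standard Gaussian. Writing
$$X+\epsilon Y = \sqrt{1+\epsilon^2}\,Z, \qquad Z = \tfrac{1}{\sqrt{1+\epsilon^2}}X + \tfrac{\epsilon}{\sqrt{1+\epsilon^2}}Y,$$
we recognize $Z = \cos(\alpha)X + \sin(\alpha)Y$ with $\alpha = \arctan(\epsilon)$, so $Z$ is a standard Gaussian obtained from $X$ by a planar rotation through angle $\alpha$.

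The next step is a union bound along the intermediate point $Z$:
$$\pr(f(X)\neq f(X+\epsilon Y)) \;\leq\; \pr(f(X)\neq f(Z)) + \pr(f(Z)\neq f(\sqrt{1+\epsilon^2}\,Z)).$$
For the first term, I would invoke Theorem \ref{nsthm}: the bound $\frac{d\theta}{\pi}$ proved there holds for any rotation angle (its derivation uses only that $X_0$ and $X_\theta$ are jointly obtained by the $\cos/\sin$ rotation on two independent Gaussians), so with $\theta = \alpha$ we get $\pr(f(X)\neq f(Z)) \leq \frac{d\alpha}{\pi} \leq \frac{d\epsilon}{\pi}$, using $\arctan(\epsilon)\leq \epsilon$.

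For the second term, I would apply Lemma \ref{linelem} to the standard Gaussian $Z$ with dilation factor $1+\tilde\epsilon := \sqrt{1+\epsilon^2}$. Since $(1+\epsilon^2/2)^2 \geq 1+\epsilon^2$, we have $\tilde\epsilon \leq \epsilon^2/2$, and the lemma yields
$$\pr(f(Z)\neq f((1+\tilde\epsilon)Z)) \leq d\tilde\epsilon\sqrt{\tfrac{n}{4\pi}} \leq \tfrac{d\epsilon^2}{4}\sqrt{\tfrac{n}{\pi}}.$$
Adding the two bounds gives the corollary. There is no genuine obstacle; the only substantive point to verify is that the rotational bound of Theorem \ref{nsthm} really applies with an arbitrary angle $\alpha$ rather than only with the angle $\arcsin(\sqrt{2\epsilon - \epsilon^2})$ arising from the GNS parametrization, but inspection of its proof confirms this.
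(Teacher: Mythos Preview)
Your proof is correct and is essentially identical to the paper's own argument: the paper likewise writes $X+\epsilon Y = rZ$ with $r=\sqrt{1+\epsilon^2}$ and $Z=\cos(\theta)X+\sin(\theta)Y$ for $\theta=\arctan(\epsilon)$, applies the union bound through $Z$, and then invokes Theorem~\ref{nsthm} and Lemma~\ref{linelem} together with $\theta\leq\epsilon$ and $r-1\leq\epsilon^2/2$. Your observation that Theorem~\ref{nsthm} must be read as a bound valid for an arbitrary rotation angle is exactly the way the paper uses it.
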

\begin{proof}
We let $r=\sqrt{1+\epsilon^2}$, $\theta=\arctan(\epsilon)$, and let $Z=\cos(\theta)X+\sin(\theta)Y$ be a normal random variable.  Note that $X+\epsilon Y= rZ$.  We then have that
$$
\pr(f(X)\neq f(X+\epsilon Y)) \leq \pr(f(X) \neq f(Z)) + \pr(f(Z) \neq f(rZ)).
$$
By Theorem \ref{nsthm} and Lemma \ref{linelem} this is at most
$$
\frac{d\theta}{\pi} + d(r-1)\sqrt{\frac{n}{4\pi}} \leq \frac{d\epsilon}{\pi} + \frac{d\epsilon^2}{4}\sqrt{\frac{n}{\pi}}.
$$
\end{proof}

In particular, we relate this to Gaussian surface area by:

\begin{lem}\label{crossarealem}
If $f$ is a boolean function with $f^{-1}(1)$ open with smooth boundary and Gaussian area $S$, and if $X$ and $Y$ are independent Gaussians then,
\begin{equation}\label{crossbound}
\lim_{\epsilon \rightarrow 0} \frac{\pr(f(X)=-1 \ \textrm{and} \ f(X+\epsilon Y)=1)}{\epsilon} = \frac{S}{\sqrt{2\pi}}.
\end{equation}
\end{lem}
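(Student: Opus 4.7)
The plan is to localize the event $\{f(X)=-1,\,f(X+\epsilon Y)=1\}$ to a thin tubular neighborhood of $\partial A$ and then rescale the normal coordinate by $\epsilon$. Let $A=f^{-1}(1)$ and let $\nu(x)$ denote the unit normal to $\partial A$ at $x\in\partial A$ pointing into $A^c$. Since $\partial A$ is smooth, the map $(x,t)\mapsto x+t\nu(x)$ is a diffeomorphism from $\partial A\times(-T,T)$ onto a tubular neighborhood for some small $T>0$, with Jacobian $J(x,t)\to 1$ as $t\to 0$, and points with $t>0$ small lie in $A^c$. For $X$ and $X+\epsilon Y$ to lie on opposite sides of $\partial A$, $X$ must be within $O(\epsilon|Y|)$ of $\partial A$, which restricts the integration to this tube up to error $o(\epsilon)$.

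Writing $X=x+t\nu(x)$ in the tube and then substituting $t=\epsilon s$, the probability becomes
\begin{align*}
\pr(f(X)=-1,\,f(X+\epsilon Y)=1) &= \epsilon\int_{\partial A}d\sigma(x)\int_0^{T/\epsilon}ds\,\phi(x+\epsilon s\nu)J(x,\epsilon s)\\
&\qquad\times\pr\bigl(x+\epsilon s\nu+\epsilon Y\in A\bigr)+o(\epsilon).
\end{align*}
For each fixed $(x,s)$, smoothness of $\partial A$ gives $\mathbf{1}[x+\epsilon s\nu+\epsilon Y\in A]\to\mathbf{1}[Y\cdot\nu(x)<-s]$ as $\epsilon\to 0$ (tangential components of $Y$ and curvature corrections enter only at second order), and $\phi(x+\epsilon s\nu)J(x,\epsilon s)\to\phi(x)$. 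Taking expectation over $Y$, whose projection $Y\cdot\nu(x)$ is a standard one-dimensional Gaussian, yields $\pr(Y\cdot\nu(x)<-s)=\Phi(-s)$ with $\Phi$ the standard normal CDF.

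Passing to the limit by dominated convergence gives
$$\lim_{\epsilon\to 0}\frac{\pr(f(X)=-1,\,f(X+\epsilon Y)=1)}{\epsilon}=\int_{\partial A}\phi(x)\,d\sigma(x)\int_0^\infty\Phi(-s)\,ds=\frac{S}{\sqrt{2\pi}},$$
where exchanging integration order gives $\int_0^\infty\Phi(-s)\,ds=\int_0^\infty s\cdot\frac{1}{\sqrt{2\pi}}e^{-s^2/2}ds=1/\sqrt{2\pi}$. The main obstacle is the justification of dominated convergence: the inner crossing probability is bounded above by $\Phi(-s+c\epsilon|Y|)$-type quantities, so one must control the $s\to\infty$ tail uniformly in $\epsilon$. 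This follows because the event forces $Y\cdot\nu\lesssim -s$, providing a Gaussian bound in $s$, which combined with the Gaussian decay of $\phi(x+\epsilon s\nu)$ on $\partial A$ gives an integrable envelope. Once these tail estimates are in place, the pointwise convergence of the integrand delivers the claimed limit.
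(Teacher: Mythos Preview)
Your argument is correct and follows essentially the same strategy as the paper: localize $X$ to a thin shell around $\partial A$, observe that to leading order only the component of $Y$ normal to $\partial A$ matters, and reduce to a one-dimensional Gaussian computation giving $\int_0^\infty \Phi(-s)\,ds = 1/\sqrt{2\pi}$. The paper organizes the same calculation a little differently: instead of introducing explicit tubular coordinates $(x,t)$ with a Jacobian and then rescaling $t=\epsilon s$, it keeps the distance $d(X,A)$ as a random variable, writes the numerator as $\pr(\epsilon Z \ge d(X,A) > 0)$ for a one-dimensional Gaussian $Z$, and then evaluates this via $\mathrm{GaussianVolume}(A_{\epsilon x}\setminus A)\sim S\epsilon x$. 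This is simply the Fubini-dual of your computation (integrating first over the position of $X$ rather than over the normal projection of $Y$), and has the minor advantage that it feeds the definition of $S$ in directly and avoids needing a tubular neighborhood of uniform width $T$ over a possibly noncompact $\partial A$.
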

\begin{proof}
First note that if $A=f^{-1}(1)$, then
$$
\lim_{\epsilon\rightarrow 0} \frac{\textrm{GaussianVolume}(A_\epsilon\backslash A)}{\epsilon}= S
$$
rather than just the liminf being equal.  Next note that since the probability that $|\epsilon Y|>\epsilon^{2/3}$ goes rapidly to 0 as $\epsilon\rightarrow 0$, we can throw away all cases where $X$ is not within $\epsilon^{2/3}$ of $\partial A$ from the left hand side of Equation \ref{crossbound}.  When $X$ is close to $\partial A$ and when $f(X)=-1$, we may approximate the probability that $f(X+\epsilon Y)=1$ by the probability that the component of $\epsilon Y$ in the direction of the shortest path from $X$ to $A$ is more than $d(X,A)$.  Since $\partial A$ is smooth, this approximation is accurate for $X$ close to $A$, and in particular for $X$ within $\epsilon^{2/3}$ should introduce an error of $O(\epsilon^{4/3})$, which can be ignored.  Hence if $Z$ is a normalized one variable Gaussian, the numerator of left hand side can be replaced by
$$
\pr(\epsilon Z \geq d(X,A)>0) = \pr\left(Z\geq \frac{d(X,A)}{\epsilon}>0\right).
$$
This is easily seen to be
\begin{align*}
\int_0^\infty \frac{1}{\sqrt{2\pi}}e^{-x^2/2}\pr(0<d(X,A)\leq \epsilon x) dx & = \int_0^\infty \frac{1}{\sqrt{2\pi}}e^{-x^2/2} \textrm{GVol}(A_{\epsilon x}\backslash A)dx\\ & = \int_0^\infty \frac{1}{\sqrt{2\pi}}e^{-x^2/2}S\epsilon x (1+o(1))dx\\ & = \frac{S\epsilon}{\sqrt{2\pi}} + o(\epsilon).
\end{align*}
Thus completing our proof.
\end{proof}

\begin{proof}[Proof of Theorem \ref{surfaceareatheorem}]
This follows immediately from Corollary \ref{wigglecor} and Lemma \ref{crossarealem} after noting that
$$\pr(f(X)=-1, f(X+\epsilon Y)=1) \sim \frac{1}{2}\pr(f(X)\neq f(X+\epsilon Y))\sim \frac{d\epsilon}{2\pi}.$$
\end{proof}

\section{Conclusion}

We have shown nearly tight bounds on the Gaussian surface area and noise sensitivity of polynomial threshold functions.  One might hope to generalize these results to work for other distributions, such as the uniform distribution on vertices of the hypercube.  Unfortunately, several aspects of this proof are difficult to generalize.  Perhaps most significantly, we lose the symmetry that allowed us to prove our original result on noise sensitivity.  Another difficulty would be in the relation between noise sensitivity and surface area.  In our case the two are essentially equivalent quantities of study.  On the other hand \cite{kn:learning} defined a notion of surface area for the hypercube distribution and proved that for even linear threshold functions there could be a gap between noise sensitivity and surface area of as much as $\Theta(\sqrt{\log(n)})$.

\end{document}